\newtheoremstyle{ex}
  { }
  { }
  { }
  { }
  {\bfseries}
  { }
  { }
  {\thmname{#1}\thmnumber{ #2}:\thmnote{ #3}}
\theoremstyle{ex}
\theoremstyle{remark}
\theoremstyle{plain}
\newtheorem{proposition}{Proposition}
\newcommand{\Sec}[1]{Section~\ref{sec:#1}}
\newcommand{\Fig}[1]{Figure~\ref{fig:#1}}
\newcommand{\Prop}[1]{Proposition~\ref{prop:#1}}
\newcommand{\Alg}[1]{Algorithm~\ref{alg:#1}}
\newcommand{\matlab}{MATLAB\xspace}
\newcommand{\eg}{e.g.}
\newcommand{\ie}{i.e.}
\newcommand{\NB}{N.B.\@\xspace}
\newcommand\ssm{SSM\@\xspace}
\newcommand{\+}{\mathsf{T}}                      
\DeclareMathOperator*\rank{rank}                          
\newcommand\normsum[1]{\sum_#1}    
\newcommand{\Ordo}{O}                            
\newcommand{\range}[2]{#1, \, \dots, \, #2}      
\newcommand{\crange}[2]{\{#1, \, \dots, \, #2\}} 
\newcommand{\setX}{\ensuremath{\mathsf{X}}}                      
\newcommand\parameter{\theta}
\newcommand\T{T}
\newcommand\Np{N}
\newcommand\Mp{M}
\newcommand\x{\mathbf{x}}
\renewcommand\a{\mathbf{a}}
\newcommand\pg{PG\@\xspace}
\newcommand\pgbs{PG-BS\@\xspace}
\newcommand\pgas{PG-AS\@\xspace}
\newcommand\ffbsi{FF-BS\@\xspace}
\newcommand\parspace{\Theta}
\newcommand\N{\mathcal{N}}
\newcommand\eye[1]{I_{#1}}
\newcommand\diag{\text{diag}}
\newcommand\KLD{D_\mathsf{KLD}}
\newcommand\TV{D_\mathsf{TV}}
\newcommand\dt{T_s}
\newlength\Papproxplotheight
\newlength\boxplotheight
\title{Ancestor Sampling for Particle Gibbs}
\author{
  Fredrik Lindsten\\
  Div. of Automatic Control\\
  Link{\"o}ping University\\
  \texttt{lindsten@isy.liu.se} \\
  \and
  Michael I. Jordan\\
  Dept. of EECS and Statistics\\
  University of California, Berkeley\\
  \texttt{jordan@cs.berkeley.edu}
  \and
  Thomas B. Sch\"on \\
  Div. of Automatic Control\\
  Link{\"o}ping University\\
  \texttt{schon@isy.liu.se} \\
}
\begin{document}
\maketitle

\begin{abstract}
  We present a novel method in the family of particle MCMC methods
  that we refer to as \emph{particle Gibbs with ancestor sampling} (\pgas).
  Similarly to the existing \emph{\pg with backward simulation} (\pgbs) procedure,
  we use backward sampling to (considerably) improve the mixing of the \pg kernel.
  Instead of using separate forward and backward sweeps as in \pgbs, however,
  we achieve the same effect in a single forward sweep.  We apply the \pgas
  framework to the challenging class of non-Markovian state-space models.
  We develop a truncation strategy of these models that is applicable in
  principle to any backward-simulation-based method, but which is particularly
  well suited to the \pgas framework.  In particular, as we show in a simulation 
  study, \pgas can yield an order-of-magnitude improved accuracy relative to 
  \pgbs due to its robustness to the truncation error.  Several application examples 
  are discussed, including Rao-Blackwellized particle smoothing and inference 
  in degenerate state-space models. This report is a slightly extended version of the paper \cite{LindstenJS:2012}.
\end{abstract}

\section{Introduction}\label{sec:intro}%
State-space models (\ssm{s}) are widely used to model time series and dynamical systems.
The strong assumptions of linearity and Gaussianity that were originally invoked in
state-space inference have been weakened by two decades of research on
sequential Monte Carlo (SMC) and Markov chain Monte Carlo (MCMC).  These Monte
Carlo methods have not, however, led to substantial weakening of a further strong
assumption, that of Markovianity.  It remains a major challenge to develop inference 
algorithms for non-Markovian \ssm{s}:
\begin{align}
  \label{eq:intro_nonmarkov}
  x_{t+1} &\sim f(x_{t+1} \mid \parameter, x_{1:t}), &
  y_t &\sim g(y_t \mid \parameter, x_{1:t}),
\end{align}
where $\parameter \in \parspace$ is a static parameter with prior density $p(\parameter)$,
$x_t$ is the latent state and $y_t$ is the observation at time $t$, respectively.
Models of this form arise in many different application scenarios, either from 
direct modeling or via a transformation or marginalization of a larger model.  
We provide several examples in \Sec{examples}.

To tackle the challenging problem of inference for non-Markovian \ssm{s},
we work within the framework of particle MCMC (PMCMC), a family of inferential
methods introduced in \cite{AndrieuDH:2010}.  The basic idea in PMCMC is 
to use SMC to construct a proposal kernel for an MCMC sampler.  Assume 
that we observe a sequence of measurements $y_{1:\T}$. We are interested 
in finding the density $p(x_{1:\T}, \parameter \mid y_{1:\T})$, \ie, the 
joint posterior density of the state sequence and the parameter.
In an idealized Gibbs sampler we would target this density by sampling as follows:
\emph{(i)} Draw $\theta^\star \mid x_{1:\T} \sim p(\theta \mid x_{1:\T}, y_{1:\T})$;
\emph{(ii)} Draw $x_{1:\T}^\star \mid \theta^\star \sim p(x_{1:\T} \mid \theta^\star, y_{1:\T})$.
The first step of this procedure can be carried out exactly if conjugate
priors are used.  For non-conjugate models, one option is to replace Step \emph{(i)}
with a Metropolis-Hastings step.  However, Step \emph{(ii)}---sampling from the 
joint smoothing density $p(x_{1:\T} \mid \parameter, y_{1:\T})$---is in 
most cases very difficult. In PMCMC, this is addressed by instead sampling 
a particle trajectory $x_{1:\T}^\star$ based on an SMC approximation of the
joint smoothing density. More precisely, we run an SMC sampler targeting 
$p(x_{1:\T} \mid \parameter^\star, y_{1:\T})$.  We then sample one of the 
particles at the final time $\T$, according to their importance weights, 
and trace the ancestral lineage of this particle to obtain the trajectory 
$x_{1:\T}^\star$.  This overall procedure is referred to as \emph{particle Gibbs} (\pg).

The flexibility provided by the use of SMC as a proposal mechanism for 
MCMC seems promising for tackling inference in non-Markovian models.  
To exploit this flexibility we must address a drawback of \pg in 
the high-dimensional setting, which is that the mixing of the 
\pg kernel can be very poor when there is path degeneracy in the SMC 
sampler~\cite{WhiteleyAD:2010,LindstenS:2012}.  This problem has been
addressed in the generic setting of \ssm{s} by adding a backward simulation 
step to the \pg sampler, yielding a method denoted \emph{\pg with backward 
simulation} (\pgbs).  It has been found that this considerably improves 
mixing, making the method much more robust to a small number of particles 
as well as larger data records \cite{WhiteleyAD:2010,LindstenS:2012}.

Unfortunately, however, the application of backward simulation is problematic 
for non-Markovian models.  The reason is that we need to consider full state 
trajectories during the backward simulation pass, leading to $\Ordo(\T^2)$ 
computational complexity (see \Sec{nonmarkov} for details).  To address 
this issue, we develop a novel PMCMC method which we refer to as 
\emph{particle Gibbs with ancestor sampling} (\pgas) that achieves
the effect of backward sampling without an explicit backward pass.  As part
of our development, we also develop a truncation method geared to non-Markovian 
models.  This method is a generic method that is also applicable to \pgbs, 
but, as we show in a simulation study in \Sec{numerical}, the effect of the 
truncation error is much less severe for \pgas than for \pgbs.  Indeed, we 
obtain up to an order of magnitude increase in accuracy in using \pgas when 
compared to \pgbs in this study.

Since we assume that it is straightforward to sample the parameter 
$\parameter$ of the idealized Gibbs sampler, we will not explicitly 
include sampling of $\parameter$ in the subsequent sections to simplify
our presentation.

This report is a slightly extended version of the paper \cite{LindstenJS:2012}.

\section{Sequential Monte Carlo}\label{sec:smc}%
We first review the standard auxiliary SMC sampler, see \eg~\cite{DoucetJ:2011,PittS:1999}.
Let $\gamma_{t}(x_{1:t})$ for $t = \range{1}{\T}$ be a sequence of unnormalized densities on $\setX^t$,
which we assume can be evaluated pointwise in linear time. Let $\bar\gamma_t(x_{1:t})$ be the corresponding
normalized probability densities. For an \ssm we would typically have $\bar\gamma_{t}(x_{1:t}) = p(x_{1:t} \mid y_{1:t})$
and $\gamma_t(x_{1:t}) = p(x_{1:t}, y_{1:t})$.
Assume that $\{x_{1:t-1}^m, w_{t-1}^m\}_{m=1}^\Np$ is a weighted particle system targeting $\bar\gamma_{t-1}(x_{1:t-1})$.
This particle system is propagated to time $t$ by sampling independently from a proposal kernel,
\begin{align}
  M_t(a_t, x_t) = \frac{w_{t-1}^{a_t} \nu_{t-1}^{a_t}}{\normsum{l} w_{t-1}^l \nu_{t-1}^{l}} R_t(x_t \mid x_{1:t-1}^{a_t}).
\end{align}
In this formulation, the resampling step is implicit and corresponds to sampling the ancestor indices $a_t$. Note that
$a_t^m$ is the index of the ancestor particle of $x_t^m$. When we write $x_{1:t}^m$ we refer to the ancestral path of $x_t^m$.
The factors $\nu_{t}^m = \nu_{t}(x_{1:t}^{m})$, known as adjustment multiplier weights,
are used in the auxiliary SMC sampler to increase the probability of
sampling ancestors that better can describe the current observation \cite{PittS:1999}.
The particles are then weighted according to $w_t^m = W_t(x_{1:t}^m)$, where the weight function is given by
\begin{align}
  \label{eq:smc_weightfunction}
  W_t(x_{1:t}) = \frac{\gamma_t(x_{1:t})}{ \gamma_{t-1}(x_{1:t-1}) \nu_{t-1}(x_{1:t-1}) R_t(x_t \mid x_{1:t-1})},
\end{align}
for $t \geq 2$. The procedure is initiated by sampling from a proposal density $x_1^m \sim R_1(x_1)$ and assigning importance weights
$w_1^m = W_1(x_1^m)$ with $W_1(x_1) = \gamma_1(x_1)/R_1(x_1)$. In PMCMC it is instructive to view this sampling procedure
as a way of generating a single sample from the density
\begin{align}
  \psi(\x_{1:\T}, \a_{2:\T}) \triangleq \prod_{m = 1}^\Np R_1(x_1^m) \prod_{t = 2}^{\T} \prod_{m = 1}^\Np M_t(a_t^m, x_t^m)
\end{align}
on the space $\setX^{\Np\T} \times \crange{1}{N}^{\Np(\T-1)}$. Here we have introduced the boldface notation
$\x_t = \crange{x_t^1}{x_t^\Np}$ and similarly for the ancestor indices.

\section{Particle Gibbs with ancestor sampling}\label{sec:pgas}
PMCMC methods is a class of MCMC samplers in which SMC is used to construct proposal kernels~\cite{AndrieuDH:2010}.
The validity of these methods can be assessed by viewing them as MCMC samplers on an extended state space in which all the random variables generated
by the SMC sampler are seen as auxiliary variables. The target density on this extended space is given by
\begin{align}
  \label{eq:pmcmc_phidef}
  \phi(\x_{1:\T}, \a_{2:\T}, k) \triangleq \frac{\bar\gamma_{\T}(x_{1:\T}^k)}{N^\T} \frac{  \psi(\x_{1:\T}, \a_{2:\T}) }{R_1(x_1^{b_1}) \prod_{t = 2}^{\T} M_t(a_t^{b_t}, x_t^{b_t})}.
\end{align}
By construction, this density admits $\bar\gamma_{\T}(x_{1:\T}^k)$ as a marginal,
and can thus be used as a surrogate for the original target density $\bar\gamma_{\T}$  \cite{AndrieuDH:2010}.
Here $k$ is a variable indexing one of the particles at the final time point and $b_{1:\T}$ corresponds to the ancestral path of this particle:
$x_{1:\T}^k = x_{1:\T}^{b_{1:\T}} =  \crange{x_1^{b_1}}{x_{\T}^{b_\T}}$. These indices are given recursively from
the ancestor indices by $b_\T = k$ and $b_{t} = a_{t+1}^{b_{t+1}}$.
The \pg sampler  \cite{AndrieuDH:2010} is a Gibbs sampler targeting $\phi$ using the following sweep
(note that $b_{1:\T} = \{a_{2:\T}^{b_{2:\T}}, b_\T\}$),
\begin{enumerate}
\item Draw $\x_{1:\T}^{\star,-b_{1:\T}}, \a_{2:\T}^{\star, -b_{2:\T}} \sim \phi(\x_{1:\T}^{-b_{1:\T}}, \a_{2:\T}^{-b_{2:\T}} \mid x_{1:\T}^{b_{1:\T}}, b_{1:\T})$.
\item Draw $k^\star \sim \phi(k \mid \x_{1:\T}^{\star,-b_{1:\T}}, \a_{2:\T}^{\star,-b_{2:\T}}, x_{1:\T}^{b_{1:\T}}, a_{2:\T}^{b_{2:\T}})$.
\end{enumerate}
Here we have introduced the notation $\x_t^{-m} = \{\range{x_t^1}{x_t^{m-1},\,\range{x_t^{m+1}}{x_t^\Np}} \}$,
$\x_{1:\T}^{-b_{1:\T}} = \crange{\x_1^{-b_1}}{\x_\T^{-b_\T}}$ and similarly for the ancestor indices.
In \cite{AndrieuDH:2010}, a sequential procedure for sampling from the conditional density appearing in Step 1 is given.
This method is known as \emph{conditional SMC} (CSMC). It takes the form of an SMC sampler
in which we condition on the event that a prespecified path $x_{1:\T}^{b_{1:\T}} = x_{1:\T}^\prime$, with
indices $b_{1:\T}$, is maintained throughout the sampler (see \Alg{csmc_as} for a related procedure).
Furthermore, the conditional distribution appearing in Step 2 of the \pg sampler is shown to be proportional to $w_\T^k$,
and it can thus straightforwardly be sampled from.

Note that we never sample new values for the variables $\{x_{1:\T}^{b_{1:\T}}, b_{1:\T-1}\}$ in this sweep. Hence, the \pg sampler is an ``incomplete'' Gibbs sampler, since it
does not loop over all the variables of the model. It still holds that the \pg sampler is ergodic, which intuitively can be explained by the fact
that the collection of variables that is left out is chosen randomly at each iteration. However, it has been observed that the \pg sampler can have very poor mixing,
especially when $\Np$ is small and/or $\T$ is large \cite{WhiteleyAD:2010,LindstenS:2012}. The reason for this poor mixing is that the
SMC path degeneracy causes the collections of variables that are left out at any two consecutive iterations to be strongly dependent.

We now turn to our new procedure, \pgas, which aims to address this fundamental issue.
Our idea is to sample new values for the \emph{ancestor indices} $b_{1:\T-1}$ as part of the CSMC procedure\footnote{Ideally, we would like to include the variables $x_{1:\T}^{b_{1:\T}}$ as well, but this is in general not possible since it would be similar to sampling from the original target density (which we assume is infeasible).}. By adding these variables to the Gibbs sweep, we can considerably improve the mixing of the \pg kernel.
The CSMC method is a sequential procedure to sample from $\phi(\x_{1:\T}^{-b_{1:\T}}, \a_{2:\T}^{-b_{2:\T}} \mid x_{1:\T}^{b_{1:\T}}, b_{1:\T})$
by sampling according to
$\{ \x_t^{\star,-b_t}, \a_t^{\star, -b_t} \} \sim \phi(\x_t^{-b_t}, \a_t^{-b_t} \mid \x_{1:t-1}^{\star,-b_{1:t-1}}, \a_{2:t-1}^{\star,-b_{2:t-1}}, x_{1:\T}^{b_{1:\T}}, b_{1:\T})$, for $t = \range{1}{\T}$. After having sampled these variables at time $t$, we add a step in which
we generate a new value for $b_{t-1} (= a_t^{b_t})$, resulting in the following sweep:
\begin{enumerate}
\item[$1^{\prime}$.] \textit{(CSMC with ancestor sampling)} For $t = \range{1}{\T}$, draw
  \begin{align*}
    \x_t^{\star,-b_t}, \a_t^{\star, -b_t} \sim {}& \phi(\x_t^{-b_t}, \a_t^{-b_t} \mid \x_{1:t-1}^{\star,-b_{1:t-1}}, \a_{2:t-1}^{\star}, x_{1:\T}^{b_{1:\T}}, b_{t-1:\T}), \\
    (a_{t}^{\star,b_{t}} =)~b^\star_{t-1} \sim {}&\phi(b_{t-1} \mid \x_{1:t-1}^{\star,-b_{1:t-1}}, \a_{2:t-1}^{\star}, x_{1:\T}^{b_{1:\T}}, b_{t:\T}).
  \end{align*}
\item[$2^{\prime}$.] Draw $(k^\star =)~b_\T^\star \sim {}\phi(b_\T \mid \x_{1:\T}^{\star,-b_{1:\T}}, \a_{2:\T}^{\star}, x_{1:\T}^{b_{1:\T}})$.

\end{enumerate}
It can be verified that this corresponds to a partially collapsed Gibbs sampler \cite{DykP:2008} and will thus leave $\phi$ invariant.
To determine the conditional densities from which the ancestor indices are drawn, consider the following factorization, following directly from \eqref{eq:smc_weightfunction},
\begin{align}
  \nonumber
  \gamma_t(x_{1:t}) &= W_t(x_{1:t}) \nu_{t-1}(x_{1:t-1}) R_t(x_t \mid x_{1:t-1}) \gamma_{t-1}(x_{1:t-1}) \\
  \nonumber
  \Rightarrow \gamma_t(x_{1:t}^{b_t}) &= w_t^{b_t} \frac{\sum_l w_{t-1}^l \nu_{t-1}^l }{ w_{t-1}^{b_{t-1}}} \frac{ w_{t-1}^{b_{t-1}} \nu_{t-1}^{b_{t-1}} }{\sum_l w_{t-1}^l \nu_{t-1}^l }
  R_t(x_t^{b_t} \mid x_{1:t-1}^{b_{t-1}}) \gamma_{t-1}(x_{1:t-1}^{b_{t-1}}) \\
  \label{eq:target_factorization}
  =\dots &= w_t^{b_t} \left( \prod_{s = 1}^{t-1} \sum_l w_s^l \nu_s^l \right) R_1(x_1^{b_1}) \prod_{s = 2}^t M_t(a_s^{b_s}, x_s^{b_s}).
\end{align}
Furthermore, we have
\begin{align}
  \nonumber
  \phi(b_t \mid {}&\x_{1:t}, \a_{2:t}, x_{t+1:\T}^{b_{t+1:\T}}, b_{t+1:\T}) \propto  \phi(\x_{1:t}, \a_{2:t}, x_{t+1:\T}^{b_{t+1:\T}}, b_{t:\T}) \\
  &\propto \frac{ \gamma_{\T}(x_{1:\T}^k) \psi(\x_{1:t}, \a_{2:t}) }{R_1(x_1^{b_1}) \prod_{s = 2}^t M_s(a_s^{b_s}, x_s^{b_s})}
  \propto  \frac{\gamma_t(x_{1:t}^{b_t})}{\gamma_t(x_{1:t}^{b_t})} \frac{\gamma_{\T}(x_{1:\T}^k) }{R_1(x_1^{b_1}) \prod_{s = 2}^t M_s(a_s^{b_s}, x_s^{b_s})}.
\end{align}
By plugging \eqref{eq:target_factorization} into the numerator we get,
\begin{align}
  \label{eq:bwdsim_phi}
  \phi(b_t \mid {}&\x_{1:t}, \a_{2:t}, x_{t+1:\T}^{b_{t+1:\T}}, b_{t+1:\T}) \propto w_t^{b_t} \frac{ \gamma_{\T}(x_{1:\T}^k) }{ \gamma_t(x_{1:t}^{b_t}) }.
\end{align}
Hence, to sample a new ancestor index for the conditioned path at time $t+1$, we proceed as follows.
Given $x_{t+1:\T}^\prime$ ($= x_{t+1:\T}^{b_{t+1:\T}})$ we compute the backward sampling weights,
\begin{align}
  \label{eq:bwdsim_weights}
  w_{t\mid\T}^m = w_{t}^{m} \frac{ \gamma_{\T}(\{x_{1:t}^m , x_{t+1:\T}^\prime \}) }{ \gamma_{t}(x_{1:t}^{m}) },
\end{align}
for $m = \range{1}{\Np}$. We then set $b_{t} = m$ with probability proportional to $w_{t\mid\T}^m$.

It follows that the proposed CSMC with ancestor sampling (Step $1^\prime$), conditioned on $\{x_{1:\T}^\prime, b_{1:\T}\}$, can be realized as in \Alg{csmc_as}.
The difference between this algorithm and the CSMC sampler derived in \cite{AndrieuDH:2010} lies in the ancestor sampling step 2(b)
(where instead, they set $a_t^{b_t} = b_{t+1}$).
By introducing the ancestor sampling, we break the strong dependence between the generated particle trajectories and the
path on which we condition.
We call the resulting method, defined by Steps $1^{\prime}$ and $2^{\prime}$ above, 
\emph{\pg with ancestor sampling} (\pgas).

\begin{algorithm}
  \caption{CSMC with ancestor sampling, conditioned on $\{x_{1:\T}^\prime, b_{1:\T}\}$}
  \label{alg:csmc_as}
  \begin{enumerate}
  \item \textbf{Initialize ($t = 1$):}
    \begin{enumerate}
    \item Draw $x_1^m \sim R_1(x_1)$ for $m \neq b_1$ and set $x_1^{b_1} = x_1^\prime$.
    \item Set $w_1^m = W_1(x_1^m)$ for $m = \range{1}{\Np}$.
    \end{enumerate}
  \item \textbf{for $t = \range{2}{\T}$:}
    \begin{enumerate}
    \item Draw $\{a_t^m, x_t^m\} \sim M_t(a_t, x_t)$ for $m \neq b_t$ and set $x_t^{b_t} = x_{t}^\prime$.
    \item Draw $a_t^{b_t}$ with $P(a_t^{b_t} = m) \propto w_{t-1 \mid \T}^m$.
    \item Set $x_{1:t}^m = \{x_{1:t-1}^{a_t^m}, x_t^m\}$ and $w_t^m = W_t(x_{1:t}^m)$ for $m = \range{1}{\Np}$.
    \end{enumerate}
  \end{enumerate}
\end{algorithm}

The idea of including the variables $b_{1:\T-1}$ in the \pg sampler 
has previously been suggested by Whiteley \cite{Whiteley:2010} and further 
explored in \cite{WhiteleyAD:2010,LindstenS:2012}.  This previous work,
however, accomplishes this with a explicit backward simulation pass, which, 
as we discuss in the following section, is problematic for our applications 
to non-Markovian \ssm{s}.  In the \pgas sampler, instead of requiring distinct 
forward and backward sequences of Gibbs steps as in PG with backward simulation (\pgbs), we obtain a similar 
effect via a single forward sweep.

\section{Truncation for non-Markovian models}\label{sec:nonmarkov}%
We return to the problem of inference in non-Markovian \ssm{s} of the form
shown in \eqref{eq:intro_nonmarkov}.  To employ backward sampling, 
we need to evaluate the ratio
\begin{align}
  \label{eq:nonmarkov_ratio}
  \frac{ \gamma_{\T}(x_{1:\T}) }{ \gamma_t(x_{1:t}) } = \frac{p(x_{1:\T}, y_{1:\T})}{p(x_{1:t}, y_{1:t})} 
  = \prod_{s = t+1}^\T g(y_{s} \mid x_{1:s}) f(x_s \mid x_{1:s-1}).
\end{align}
In general, the computational cost of computing the backward sampling weights will thus 
be $\Ordo(\T)$. This implies that the cost of generating a full backward 
trajectory is $\Ordo(\T^2)$. It is therefore computationally prohibitive to employ
backward simulation type of particle smoothers, as well as the \pg samplers 
discussed above, for general non-Markovian models.

To make progress, we consider non-Markovian models in which there is a
decay in the influence of the past on the present, akin to that in 
Markovian models but without the strong Markovian assumption. Hence, it is possible
to obtain a useful approximation when the product in \eqref{eq:nonmarkov_ratio} 
is truncated to a smaller number of factors, say $p$. We then replace 
\eqref{eq:bwdsim_weights} with the approximation,
\begin{align}
  \label{eq:nonmarkov_truncated_weights}
   \widetilde w_{t\mid\T}^{p,m} = w_t^{m} \frac{ \gamma_{t+p}(\{x_{1:t}^m , x_{t+1:t+p}^\prime \}) }{ \gamma_t(x_{1:t}^{m}) }.
\end{align}
The following proposition formalizes our assumption.
\begin{proposition}\label{prop:truncation_kld}%
  Let $P$ and $\widetilde P_p$ be the probability distributions on $\crange{1}{\Np}$,
  defined by the backward sampling weight \eqref{eq:bwdsim_weights} and the truncated backward sampling weights \eqref{eq:nonmarkov_truncated_weights}, respectively.
  Let $h_s(k) = g(y_{t+s} \mid x_{1:t}^k, x_{t+1:t+s}^\prime) f(x_{t+s}^\prime \mid x_{1:t}^k, x_{t+1:t-s}^\prime)$ and assume that
  $\max_{k,l} \left( h_s(k) / h_s(l) -1 \right) \leq A\exp(-cs)$,
  for some constants $A$ and $c > 0$. Then, $\KLD(P \| \widetilde P_p) \leq C\exp(-cp)$ for some constant $C$, where $\KLD$ is the Kullback-Leibler divergence (KLD).
\end{proposition}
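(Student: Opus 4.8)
The plan is to exploit the observation that the tail factors dropped by the truncation are, under the standing assumption, almost constant in the particle index $k$, so that normalizing with or without them barely changes the distribution. Write $\pi_k = w_t^k$ for the current (unnormalized) particle weights and let $h_s(k)$ be as in the statement; by the factorization underlying \eqref{eq:nonmarkov_ratio} we have $P(k) \propto \pi_k \prod_{s=1}^{\T-t} h_s(k)$ and $\widetilde P_p(k) \propto \pi_k \prod_{s=1}^{\min(p,\,\T-t)} h_s(k)$. If $p \geq \T-t$ the two distributions coincide and there is nothing to prove, so assume $p < \T-t$.

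First I would cancel the common factors $\pi_k \prod_{s=1}^{p} h_s(k)$ to get
\[
  \frac{P(k)}{\widetilde P_p(k)} = \frac{1}{Z}\prod_{s=p+1}^{\T-t} h_s(k),
  \qquad
  Z = \frac{\sum_l \pi_l \prod_{s=1}^{\T-t} h_s(l)}{\sum_l \pi_l \prod_{s=1}^{p} h_s(l)} = \mathbb{E}_{\widetilde P_p}\!\Big[\prod_{s=p+1}^{\T-t} h_s(k)\Big],
\]
with $Z$ independent of $k$. Substituting this into the definition of the KLD turns it into a difference of a sum of $P$-expectations of $\log h_s$ and the logarithm of a single $\widetilde P_p$-expectation:
\[
  \KLD(P \| \widetilde P_p) = \sum_{s=p+1}^{\T-t} \mathbb{E}_{P}[\log h_s(k)] \;-\; \log \mathbb{E}_{\widetilde P_p}\!\Big[\prod_{s=p+1}^{\T-t} h_s(k)\Big].
\]

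Next I would bound each piece by the per-step oscillation of $h_s$. Writing $\overline h_s = \max_k h_s(k)$ and $\underline h_s = \min_k h_s(k)$, the first term is at most $\sum_{s=p+1}^{\T-t}\log\overline h_s$, while $\prod_{s=p+1}^{\T-t} h_s(k) \geq \prod_{s=p+1}^{\T-t}\underline h_s$ pointwise makes the second term at most $-\sum_{s=p+1}^{\T-t}\log\underline h_s$. Hence $\KLD(P \| \widetilde P_p) \leq \sum_{s=p+1}^{\T-t}\log(\overline h_s/\underline h_s)$. The hypothesis $\max_{k,l}\bigl(h_s(k)/h_s(l)-1\bigr) \leq A e^{-cs}$ is precisely $\overline h_s/\underline h_s \leq 1 + A e^{-cs}$, so with $\log(1+x)\leq x$ and a geometric tail sum,
\[
  \KLD(P \| \widetilde P_p) \leq \sum_{s=p+1}^{\infty} A e^{-cs} = \frac{A e^{-c}}{1-e^{-c}}\, e^{-cp},
\]
which is the claim with $C = A e^{-c}/(1-e^{-c})$.

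I expect the only delicate point to be the bookkeeping with the normalizing constants — specifically recognizing $\log Z$ as $\log\mathbb{E}_{\widetilde P_p}[\cdot]$ so that it gets the right sign and can be lower-bounded by $\sum\log\underline h_s$; after that, everything is term-by-term monotonicity and summing a geometric series. It is also worth stating the mild regularity needed for the statement to make sense: positivity of $w_t^m$ for the conditioned index and of $f,g$ at the relevant arguments, so that $P$ and $\widetilde P_p$ are well-defined and the logarithms are finite.
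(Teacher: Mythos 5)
Your proof is correct and follows essentially the same route as the paper's: cancel the common factors, reduce $\log\bigl(P(k)/\widetilde P_p(k)\bigr)$ to the oscillation of the dropped tail factors $h_s$, apply $\log(1+x)\leq x$, and sum the geometric tail, arriving at the same constant $C = Ae^{-c}/(1-e^{-c})$. The only (cosmetic) difference is that you bound the normalizing ratio via $\overline h_s$ and $\underline h_s$ separately, whereas the paper bounds it in a single weighted-sum comparison using $h_s(k)\leq h_s(l)(1+\varepsilon_s)$; your remarks on the degenerate case $p\geq \T-t$ and on the positivity needed for the logarithms to be finite are sensible additions.
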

\begin{proof}
  See Appendix~\ref{app:proofs}.
\end{proof}
From \eqref{eq:nonmarkov_truncated_weights}, we see that we can compute the 
backward weights in constant time under the truncation within the \pgas framework. 
The resulting approximation can be quite useful; indeed, in our experiments we have 
seen that even $p = 1$ can lead to very accurate inferential results.  In general,
however, it will not be known a priori how to set the truncation level $p$ for any 
given problem.  To address this problem, we propose to use an adaption of the truncation 
level.  Since the approximative weights \eqref{eq:nonmarkov_truncated_weights} 
can be evaluated sequentially, the idea is to start with $p = 1$ and then increase 
$p$ until the weights have, in some sense, converged.  In particular, in our
experimental work, we have used the following simple approach.

Let $\widetilde P_p$ be the discrete probability measure defined by 
\eqref{eq:nonmarkov_truncated_weights}.  
Let $\varepsilon_p = \TV(\widetilde P_p, \widetilde P_{p-1})$
be the total variation (TV) distance between the distributions for two consecutive truncation levels. We then compute the exponentially decaying moving average
of the sequence $\varepsilon_p$, with forgetting factor $\gamma \in [0,\,1]$, and stop when this falls below some threshold $\tau \in [0,\,1]$.
This adaption scheme removes the requirement to specify $p$ directly, but instead introduces the design parameters $\gamma$ and $\tau$.
However, these parameters are much easier to reason about -- a small value for $\gamma$ gives a rapid response to changes in $\varepsilon_p$
whereas a large value gives
a more conservative stopping rule, improving the accuracy of the approximation at the cost of higher computational complexity.
A similar trade off holds for the threshold $\tau$ as well. Most importantly, we have found that the same values
for $\gamma$ and $\tau$ can be used for a wide range of models, with very different mixing properties.

To illustrate the effect of the adaption rule, and how the distribution $\widetilde P_p$ typically evolves as we increase $p$, we provide
two examples in \Fig{degen_Papprox}. These examples are taken from the simulation study provided in \Sec{numerical_degen_rndlgss}.
Note that the untruncated distribution $P$ is given for the maximal value of $p$, \ie, furthest to the right in the figures.
By using the adaptive truncation, we can stop the evaluation of the weights at a much earlier stage, and still obtain an accurate
approximation of $P$.

\begin{figure}[ptb]
  \centering
  \includegraphics[height = \Papproxplotheight]{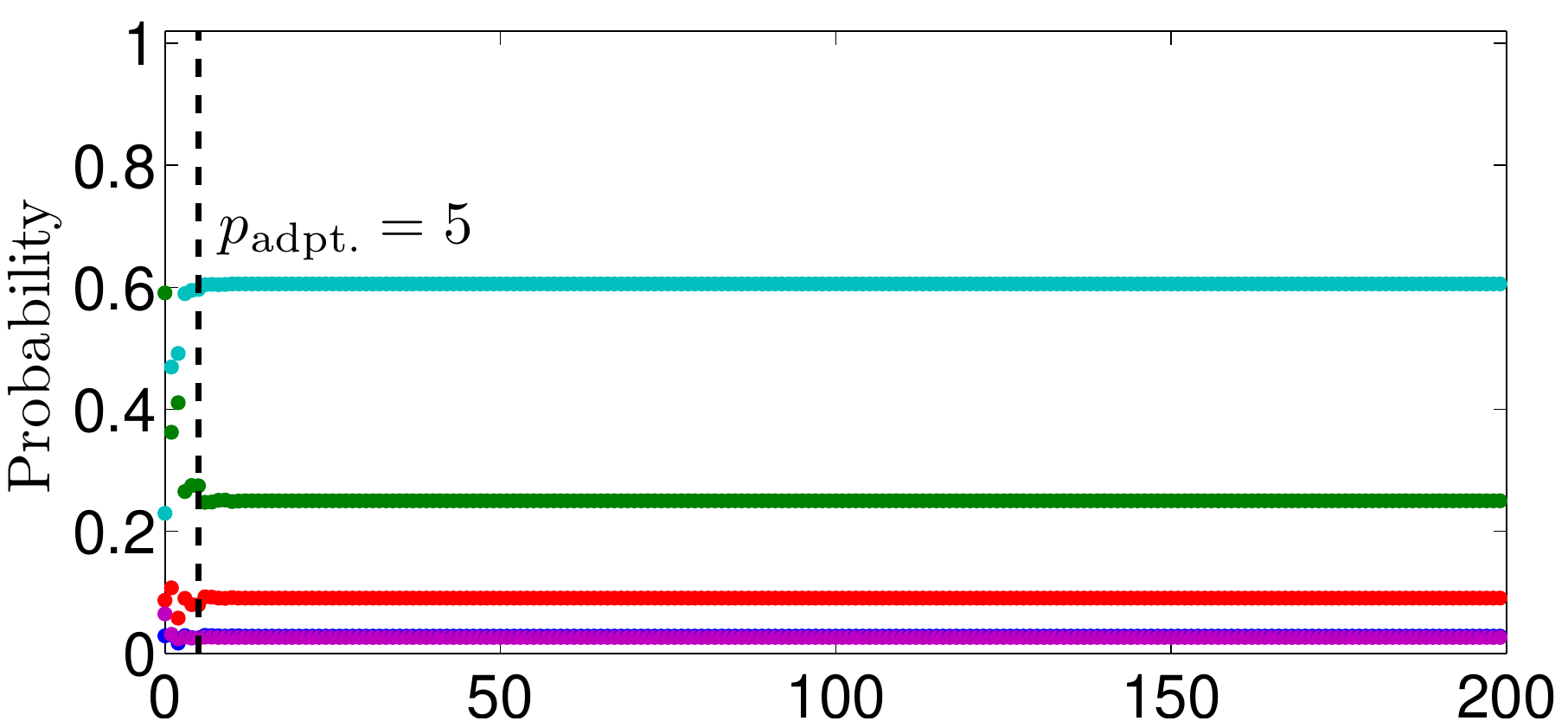}
  \includegraphics[height = \Papproxplotheight]{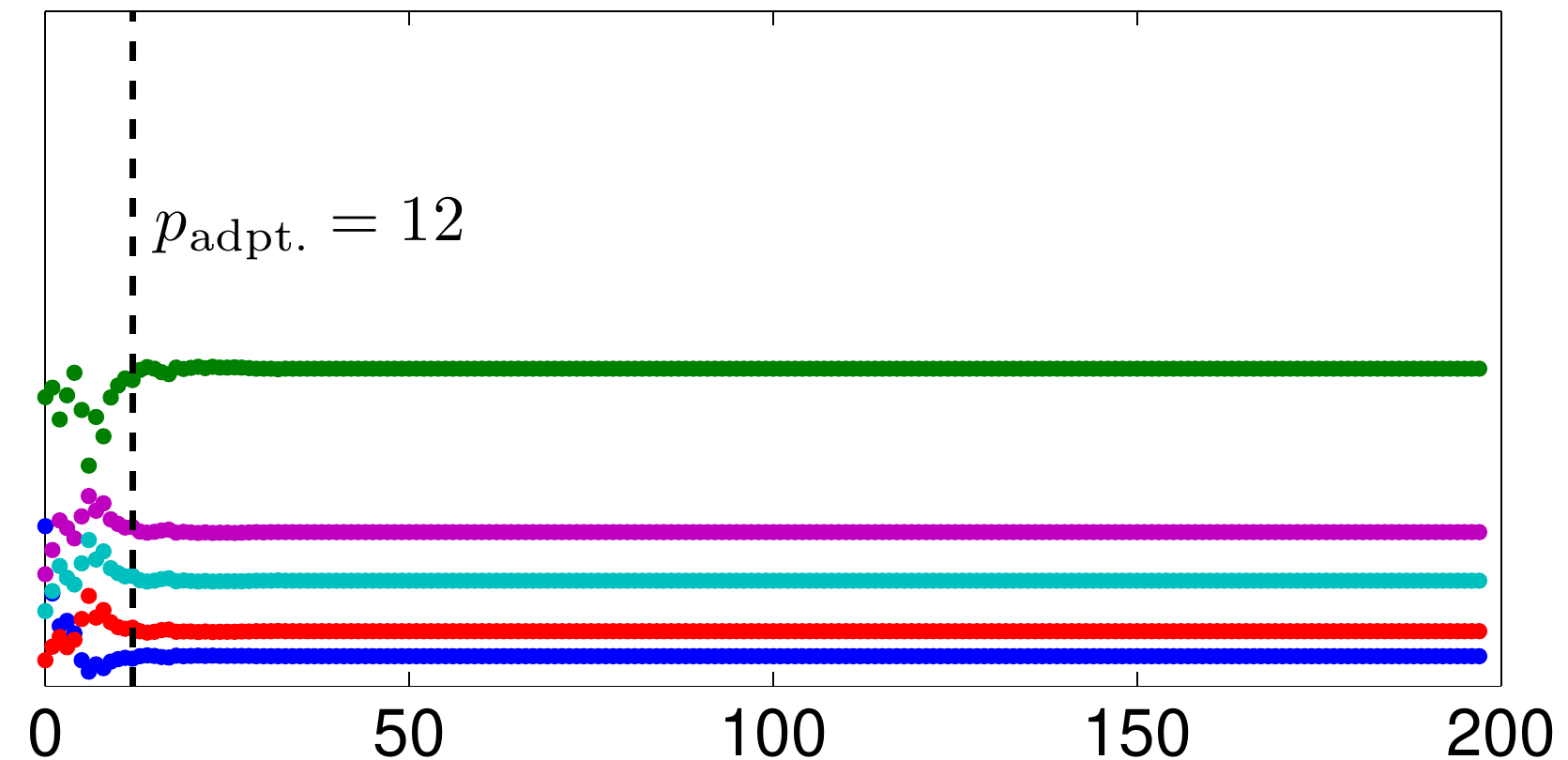}
  \caption{Probability under $\widetilde P_p$ as a function of the truncation level $p$
    for two different systems; one 5 dimensional (left) and one 20 dimensional (right).
    The $\Np = 5$ dotted lines correspond to $\widetilde P_p(m)$ for $m \in \crange{1}{\Np}$, respectively (\NB two of the lines overlap in the left figure).
    The dashed vertical lines show the
    value of the truncation level $p_{\mathrm{adpt.}}$, resulting from the adaption scheme with $\gamma = 0.1$ and $\tau = 10^{-2}$.
    See \Sec{numerical_degen_rndlgss} for details on the experiments.}
  \label{fig:degen_Papprox}
\end{figure}

\section{Application areas}\label{sec:examples}%
In this section we present examples of problem classes involving
non-Markovian \ssm{s} for which the proposed \pgas sampler
can be applied. Numerical illustrations are provided in \Sec{numerical}.

\subsection{Rao-Blackwellized particle smoothing}\label{sec:rbps}%
One popular approach to increase the efficiency of SMC samplers for \ssm{s} is to marginalize over one component
of the state, and apply an SMC sampler in the lower-dimensional marginal space.
This leads to what is known as the Rao-Blackwellized particle filter (RBPF) \cite{ChenL:2000,DoucetGA:2000,SchonGN:2005}.
The same approach has also been applied to state smoothing \cite{SarkkaBG:2012,FongGDW:2002}, but it turns out
that Rao-Blackwellization is less straightforward in this case, since the marginal state-process will be non-Markovian.
As an example, a mixed linear/nonlinear Gaussian \ssm (see, \eg, \cite{SchonGN:2005}) with ``nonlinear state'' $x_t$
and ``conditionally linear state'' $z_t$, can be reduced to
\begin{align}
  x_t &\sim p(x_t \mid x_{1:t-1}, y_{1:t-1}), &
  y_t &\sim p(y_t \mid x_{1:t}, y_{1:t-1}).
\end{align}
These conditional densities are Gaussian and can be evaluated for any fixed marginal state trajectory $x_{1:t-1}$ by running a conditional Kalman filter
to marginalize the $z_t$-process.

In order to apply a backward-simulation-based method (\eg, a particle smoother) 
for this model, we need to evaluate the backward sampling weights \eqref{eq:bwdsim_weights}.
In a straightforward implementation, we thus need to run $\Np$ Kalman filters for $\T-t$ time steps, 
for each $t = \range{1}{\T-1}$.
The computational complexity of this calculation can be reduced by employing the 
truncation proposed in \Sec{nonmarkov}\footnote{For the specific problem of Rao-Blackwellized smoothing
in conditionally Gaussian models, a backward simulator which can be implemented in $\Ordo(\T)$ computational
complexity has recently been proposed in \cite{SarkkaBG:2012}. This is based on the idea of propagating information
backward in time as the backward samples are generated.}.

\subsection{Particle smoothing for degenerate state-space models}\label{sec:degenerate}%
Many dynamical systems are most naturally modelled as degenerate in the sense that the transition kernel of the state process does not admit any dominating measure.
For instance, consider a nonlinear system with additive noise of the form,
\begin{align}
  \label{eq:degenssm_model1}
  \xi_t &= f(\xi_{t-1}) + G\omega_{t-1}, &
  y_t &= g(\xi_t) + e_t,
\end{align}
where $G$ is a tall matrix, and consequently $\rank(G) < \dim(\xi_t)$. That is, the process noise covariance matrix is singular. SMC samplers can straightforwardly be
applied to this type of models, but it is more problematic to address the smoothing problem using particle methods. The
reason is that the backward kernel also will be degenerate and it cannot be approximated in a natural way by the forward filter particles,
as is normally done in backward-simulation-based particle smoothers.

A possible remedy for this issue is to recast the degenerate \ssm as a non-Markovian model in a lower-dimensional space.
Let $G = U \begin{bmatrix} \Sigma & 0 \end{bmatrix}^\+ V^\+$ with unitary $U$ and $V$ be a singular value decomposition of $G$ and
let,
\begin{align}
  \begin{bmatrix}
    x_t \\ z_t
  \end{bmatrix} \triangleq U^\+ \xi_t = U^\+ f(UU^\+ \xi_{t-1}) +
  \begin{bmatrix}
    \Sigma V^\+ \omega_{t-1} \\ 0
  \end{bmatrix}.
\end{align}
For simplicity we assume that $z_1$ is known.
If this is not the case, it can be included in the system state or seen as a static parameter of the model. 
Hence, the sequence $z_{1:t}$ is $\sigma(x_{1:t-1})$-measurable and we can write $z_t = z_t(x_{1:t-1})$.
With $v_{t} \triangleq \Sigma V^\+ \omega_{t}$ and by appropriate definitions of the functions $f_x$ and $h$,
the model \eqref{eq:degenssm_model1} can thus be rewritten as,
\begin{align}
  x_t &= f_x(x_{1:t-1}) + v_{t-1}, &
  y_t &= h(x_{1:t}) + e_t,
\end{align}
which is a non-degenerate, non-Markovian \ssm. By exploiting the truncation proposed in \Sec{nonmarkov}
we can thus apply \pgas to do inference in this model.
In fact, this is nothing but another application of Rao-Blackwellization as discussed in \Sec{rbps}, where the $z_t$-state is conditionally
deterministic and thus trivially marginalizable.

\subsection{Additional problem classes}
There are many more problem classes in which non-Markovian models arise and in which 
backward-simulation-based methods can be of interest.
For instance, the Dirichlet process mixture model (DPMM, see, \eg, \cite{HjortHMW:2010}) is a popular nonparametric Bayesian model for mixtures
with an unknown number of components. Using a Polya urn representation, the mixture labels are given by a non-Markovian stochastic process,
and the DPMM can thus be seen as a non-Markovian \ssm. SMC has previously been used for inference in DPMMs
\cite{MacEachernCL:1999,Fearnhead:2004}. An interesting venue for future work 
is to use the \pgas sampler for these models. 
A second example in Bayesian nonparametrics is Gaussian process (GP) regression and classification (see, \eg, \cite{RasmussenW:2006}).
The sample path of the GP can be seen as the state-process in a non-Markovian SSM.
We can thus employ PMCMC, and in particular \pgas, to address these inference problems.

An application in genetics, for which SMC has been been successfully applied, is reconstruction of phylogenetic trees \cite{Bouchard-CoteSJ:2012}.
A phylogenetic tree is a binary tree with observation at the leaf nodes. SMC is used to construct the tree in a bottom up fashion.
A similar approach has also been used for Bayesian agglomerative clustering, in which SMC is used to construct a binary clustering tree based
on Kingman's coalescent \cite{TehDR:2008}. The generative models for the trees used in \cite{Bouchard-CoteSJ:2012,TehDR:2008} are in fact
Markovian, but the observations give rise to a conditional dependence which destroys the Markov property. To employ backward simulation
to these models, we are thus faced with problems of a similar nature as those discussed in \Sec{nonmarkov}.

\section{Numerical evaluation}\label{sec:numerical}
This section contains a numerical evaluation of the proposed method.
First, we consider linear Gaussian systems, which is instructive since the exact smoothing density then is available,
\eg, by running a modified Bryson-Frazier (MBF) smoother \cite{Bierman:1973}.
Second, we apply the proposed method for joint state and parameter inference in a target tracking scenario.

\subsection{RBPS: Linear Gaussian state-space model}
As a first example, we consider Rao-Blackwellized particle smoothing (RBPS) in
a single-output 4th-order linear Gaussian \ssm.
The system has poles in $-0.65$, $-0.12$ and $0.22 \pm 0.10i$ and is excited by white Gaussian noise
with variance $0.1\eye{4}$. The scalar output $y_t$ is observed in white Gaussian noise with variance $0.1$.
We generate $\T = 100$ samples from the system and run \pgas and \pgbs, marginalizing
three out of the four states using an RBPF, \ie, $\dim(x_t) = 1$. Both methods are run for $R = 10000$ iterations using $\Np = 5$ particles.
The truncation level is set to $p=1$, leading to a coarse approximation.
The total computational complexity for each sampler is $\Ordo(R\Np\T p)$.
We discard the first 1000 iterations and then compute running means of the state trajectory $x_{1:\T}$.
From these, we then compute the running root mean squared errors (RMSEs) $\epsilon_r$ \emph{relative to the true posterior means} (computed with an MBF smoother).
Hence, if no approximation would have been made, we would expect $\epsilon_r \rightarrow 0$, so any static error can be seen as the effect
of the truncation. The results for five independent runs from both \pg samplers are shown in \Fig{rbps1_res}.
First, we note that both methods give accurate results. Still,
the error for \pgas is close to an order of magnitude less than for \pgbs. Furthermore,
it appears as if the error for \pgas would decrease further, given more iterations, suggesting that the bias
caused by the truncation is dominated by the Monte Carlo variance, even after $R = 10000$ iterations.

\begin{figure}[ptb]
  \centering
  \includegraphics[width = 0.9\columnwidth]{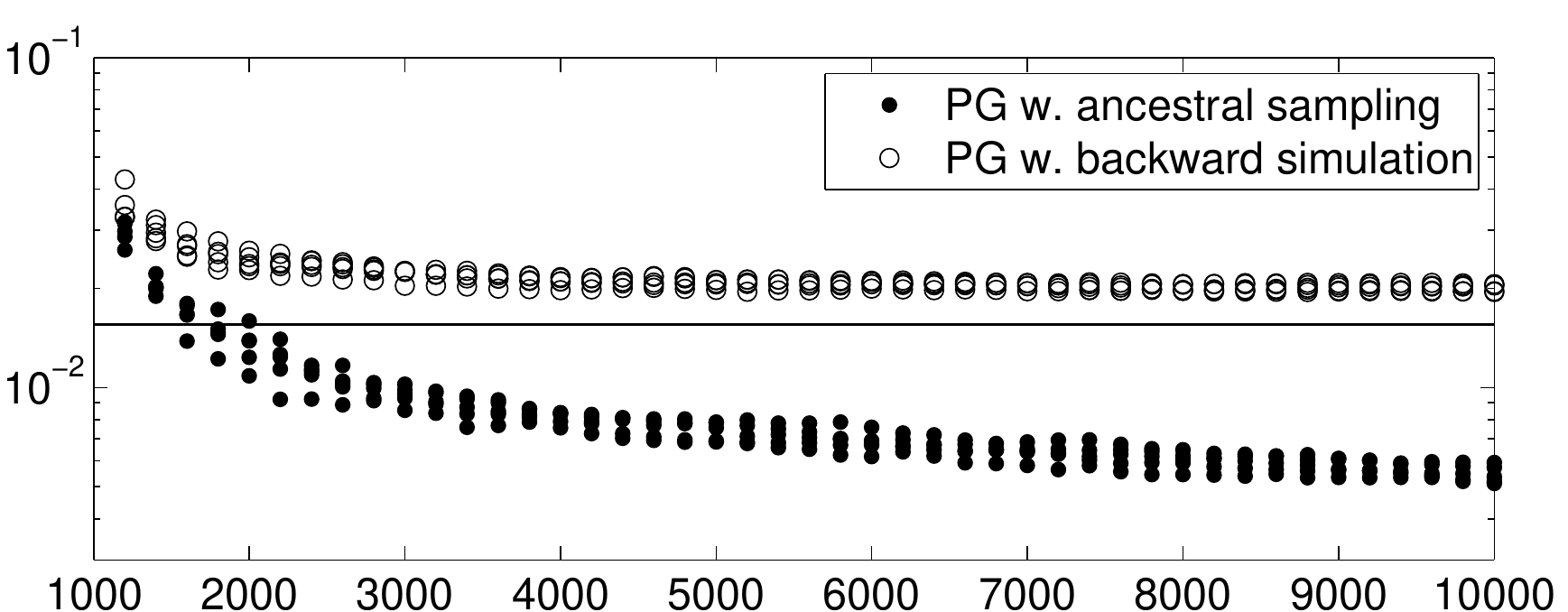}
  \caption{Rao-Blackwellized state smoothing using \pg.
    Running RMSEs for five independent runs of \pgas ($\bullet$) and \pgbs ($\circ$), respectively. The truncation level is set to $p = 1$.
    The solid line corresponds to a run of an untruncated \ffbsi.}
  \label{fig:rbps1_res}
\end{figure}

For further comparison, we also run an untruncated forward filter/backward simulator (\ffbsi) particle smoother \cite{GodsillDW:2004}, using $\Np = 5000$ forward filter particles and
$\Mp = 500$ backward trajectories (with a computational complexity of $\Ordo(\Np\Mp\T^2)$). The resulting RMSE value is shown as a solid line in \Fig{rbps1_res}.
These results suggest that PMCMC samplers, such as the \pgas, indeed can be serious competitors to more ``standard'' particle smoothers.
Even with $p=1$, \pgas outperforms \ffbsi in terms of accuracy and, due to the fact that the ancestor sampling allows us to use as few as $\Np = 5$ particles at each iteration,
at a lower computational cost.

\subsection{Random linear Gaussian systems with rank deficient process noise covariances}\label{sec:numerical_degen_rndlgss}%
To see how the \pg samplers are affected by the choice of truncation level $p$ and by the mixing properties of the system,
we evaluate them on random linear Gaussian \ssm{s} of different orders. We generate 150 random systems, using the \matlab function
\verb+drss+ from the Control Systems Toolbox, with model orders 2, 5 and 20 (50 systems for each model order).
The number of outputs are taken as 1, 2 and 4 for the different model orders, respectively.
The systems are then simulated for $\T = 200$ time steps, driven by Gaussian process noise entering only on the first state component.
Hence, the rank of the process noise covariance is 1 for all systems.
The process noise and measurement noise variances are both set to 0.1.

We run the \pgas and \pgbs samplers for 10000 iterations using $\Np = 5$ particles.
We consider different fixed truncation levels,
($p = 1$, $2$ and $3$ for 2nd order systems and $p = 1$, $5$ and $10$ for 5th and 20th order systems),
as well as an adaptive level with $\gamma = 0.1$ and $\tau = 10^{-2}$. 
Again, we compute running posterior means (discarding 1000 samples) and RMSE values relative the true posterior mean.
Box plots are shown in \Fig{degen_rndlgss}.
Since the process noise only enters on one of the state components, the mixing tends to deteriorate as we increase the model order.
\Fig{degen_Papprox} shows how the probability distributions on $\crange{1}{\Np}$ change as we increase the truncation level,
in two representative cases for a 5th and a 20th order system, respectively.
By using an adapted level, we can obtain accurate results
for systems of different dimensions, without having to change any settings between the runs.

\begin{figure}[ptb]
  \centering
  \includegraphics[height = \boxplotheight]{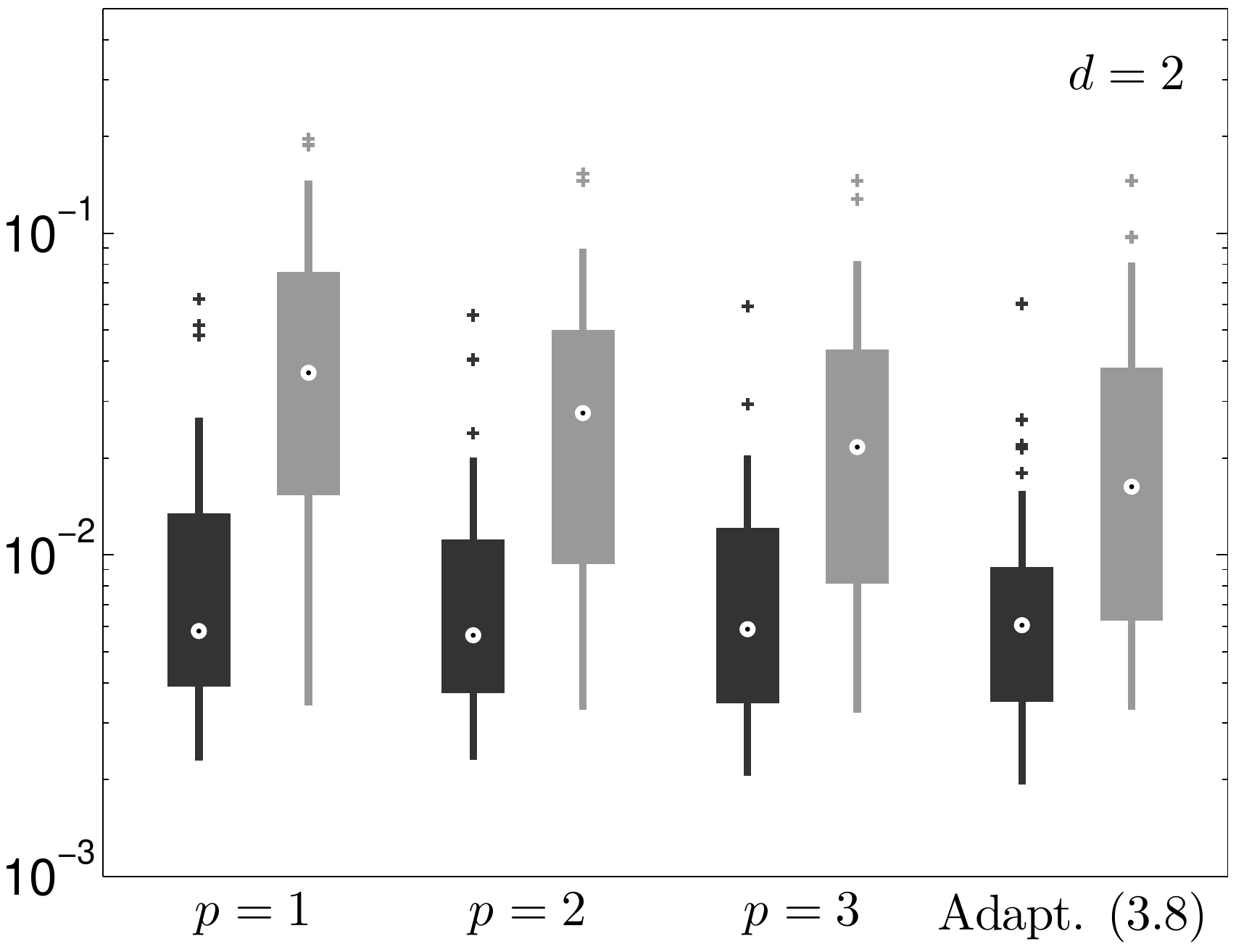}
  \includegraphics[height = \boxplotheight]{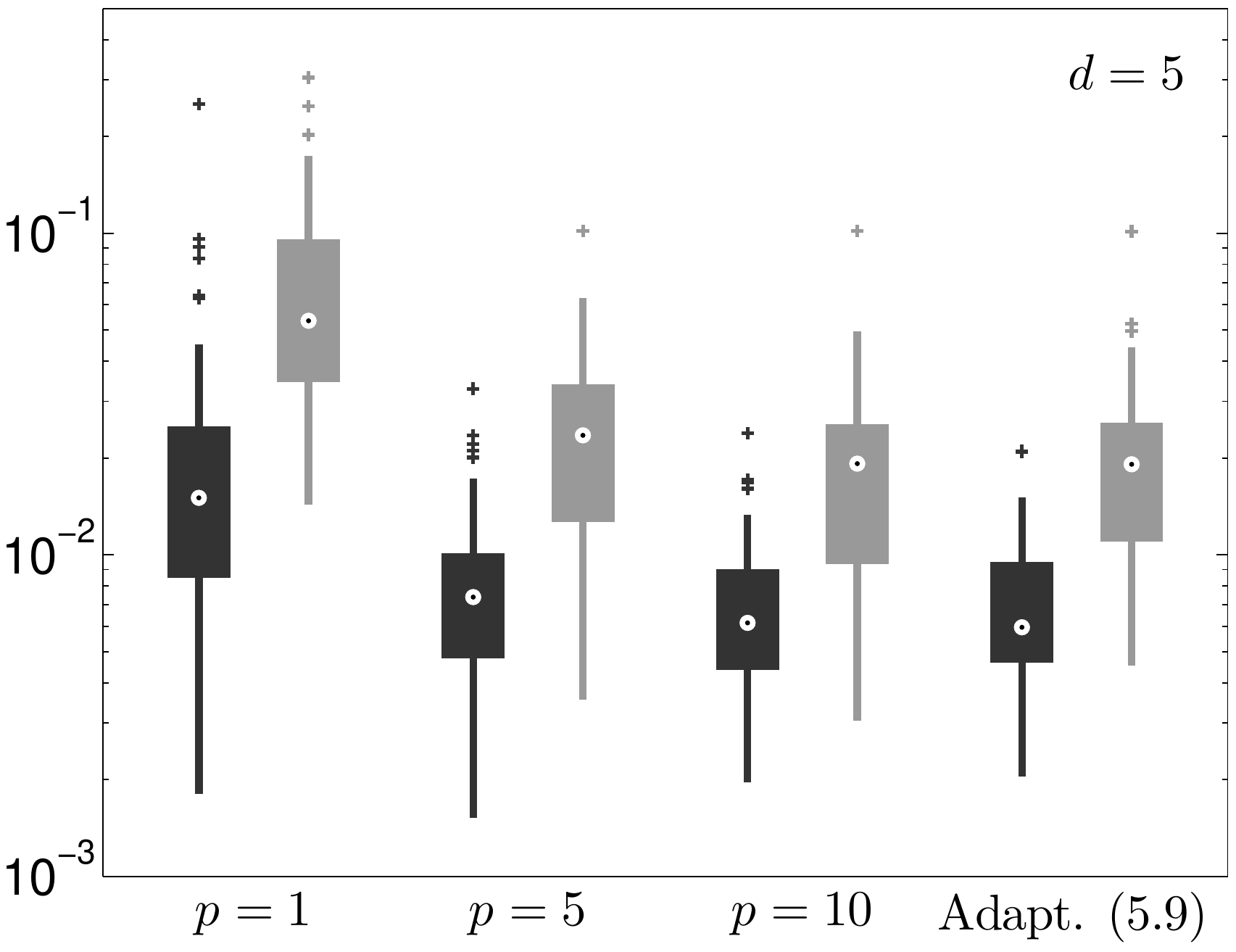}
  \includegraphics[height = \boxplotheight]{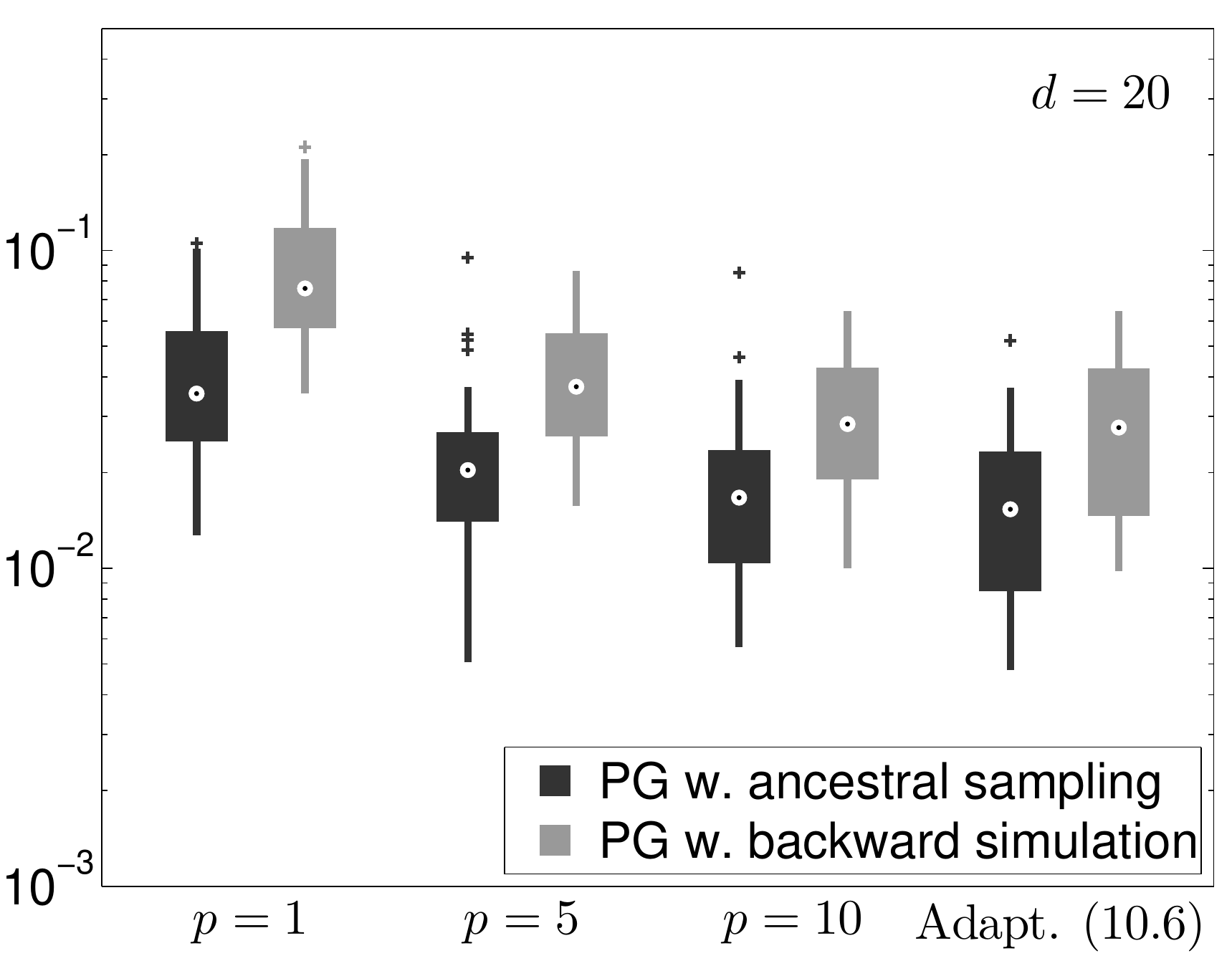}
  \caption{Box plots of the RMSE errors for \pgas (black) and \pgbs (gray), 
    for 150 random systems of different dimensions $d$
    (upper left, $d = 2$; upper right, $d = 5$; bottom, $d = 20$). Different values for the truncation level $p$ are considered. The rightmost boxes correspond
    to an adaptive threshold and the values in parentheses are the average over all systems and MCMC iterations (the same for both methods).
    The dots within the boxes show the median errors.}
  \label{fig:degen_rndlgss}
\end{figure}

\subsection{Range-bearing tracking in model with rank deficient process noise covariance}\label{sec:degen1}%
Target tracking is an area in which SMC methods have been applied with great success, see \eg
\cite{ArulampalamMGC:2002,ArulampalamRGM:2004,RisticAG:2004}.
Tracking is most commonly seen as an online filtering problem, though in certain scenarios
it might be beneficial to instead view it as a smoothing problem. For instance, if a target tracker
in a surveillance system detects some abnormal behaviour, it can be interesting to apply a smoother
to obtain refined estimates of the target's position prior to the detection.

Here, we consider smoothing in a range-bearing target tracking scenario. The system state consists
of the target's position and velocity in two dimensions, $\xi_t =
\begin{pmatrix}
  p^x_t & p^y_t & v^x_t & v^y_t
\end{pmatrix}^\+$.
We use a coordinated turn (CT) model, which is a standard model for a manoeuvring target (see \eg \cite{ArulampalamRGM:2004}),
\begin{subequations}
  \label{eq:degen1_ctmodel}
  \begin{align}
    \begin{pmatrix}
      p^x_t \\ p^y_t \\ v^x_t \\ v^y_t
    \end{pmatrix}^\+
    =
    \underbrace{
      \begin{pmatrix}
        p^x_{t-1} + \frac{\sin(\dt\Phi_{t-1} )}{\Phi_{t-1}} v^x_{t-1} - \frac{1-\cos(\dt\Phi_{t-1})}{\Phi_{t-1}} v^y_{t-1} \\
        p^y_{t-1} + \frac{1-\cos(\dt\Phi_{t-1})}{\Phi_{t-1}} v^x_{t-1} + \frac{\sin(\dt\Phi_{t-1} )}{\Phi_{t-1}} v^y_{t-1} \\
        \cos(\dt\Phi_{t-1}) v^x_{t-1} - \sin(\dt\Phi_{t-1}) v^y_{t-1} \\
        \sin(\dt\Phi_{t-1}) v^x_{t-1} + \cos(\dt\Phi_{t-1}) v^y_{t-1}
      \end{pmatrix}
    }_{= f_\parameter(\xi_{t-1})}
    +
    \underbrace{
      \begin{pmatrix}
        \frac{\dt^2}{2} & 0 \\
        0 & \frac{\dt^2}{2} \\
        \dt & 0 \\
        0 & \dt
      \end{pmatrix}}_{= G} \omega_{t-1}.
  \end{align}
  The turn rate is given by
  \begin{align}
    \Phi_t = \frac{\parameter}{\sqrt{(v^x_t)^2 + (v^y_t)^2}},
  \end{align}
\end{subequations}
which depends nonlinearly on the system state.
The parameter $\parameter$ is the manoeuvre acceleration, which we assume is fixed but unknown. This
is done to illustrate the fact that \pgas straightforwardly can be used for joint parameter and state inference,
as pointed out in \Sec{intro}. The system is assumed to be affected by a random acceleration $\omega_t \sim \N(0,Q)$ (the process noise), here with $Q = 10\eye{2}$.
This is a common assumption for many models
used in target tracking. The matrix $G$ arises from a time discretization of a continuous time model, where $\dt = 0.1$ is the sampling time.
The initial state of the system is given a Gaussian prior, $\N\left(
\begin{pmatrix}
  500 & 500 & 0 & 0
\end{pmatrix}^\+, \diag\left(
  \begin{pmatrix}
    20 & 20 & 5 & 5
  \end{pmatrix}^\+
\right) \right)$.

We assume that the range and bearing of the target can be observed, so that the measurements are given by,
\begin{align}
  y_t &=
  \begin{pmatrix}
    \sqrt{(p^x_t)^2 + (p^y_t)^2} \\
    \arctan(p^x_t / p^y_t)
  \end{pmatrix} + e_t, & e_t &\sim \N\left(0,
    \begin{pmatrix}
      50 & 0 \\ 0 & 10^{-4}
    \end{pmatrix}\right ).
\end{align}
This choice of measurement noise covariance corresponds to an accurate bearing measurement, but an uninformative range measurement.
Such a measurement could for instance arise in visual tracking, where the range is estimated based on the size of the target.

We initialize the system as $\xi_1 =
\begin{pmatrix}
  490 & 490 & 0 & 5
\end{pmatrix}^\+$ and simulate it for $\T = 200$ time steps. The
true target trajectory is shown in \Fig{degen1_position}.
Note that the process noise covariance $GQG^\+$ is singular, which implies that care needs to be taken when designing a smoothing algorithm
for this model. Here, we apply a linear state transformation, as suggested in \Sec{degenerate}, to reduce the model to a lower-dimensional
state-space. With $G = U \begin{bmatrix} \Sigma & 0 \end{bmatrix}^\+ V^\+$ we define $\begin{bmatrix} x_t^\+ & z_t^\+ \end{bmatrix}^\+ = U^\+ \xi_t$.
We then employ the \pgas sampler for joint parameter and state inference, by targeting the density $p(\parameter, z_1, x_{1:\T} \mid y_{1:\T})$.
We apply a Metropolis-Hastings step to update $\parameter$, using
a Gaussian random walk proposal with standard deviation $\sigma = 0.2$ and target density $p(\parameter | z_1, x_{1:\T}, y_{1:\T})$. The initial state of the system is unknown,
so the variable $z_1$ is seen as a part of the system state. That is, the SMC sampler targets the sequence of densities $p_\parameter(z_1, x_{1:t} \mid y_{1:t})$
for $t = \range{1}{\T}$.

\begin{figure}[ptb]
  \centering
  \includegraphics[width = 0.7\linewidth]{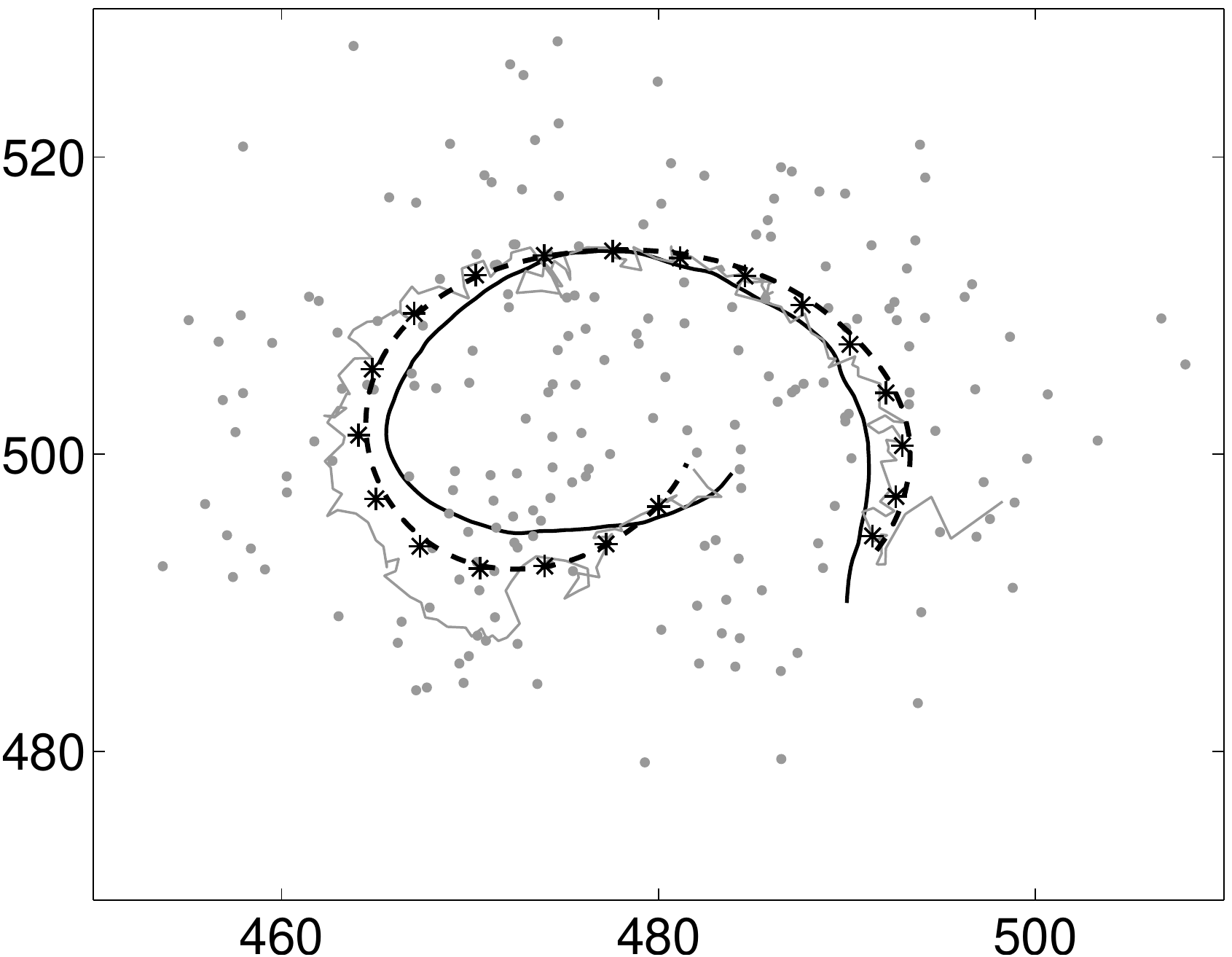}
  \caption{Target trajectory in the horizontal plane (black solid line) and smoothed posterior means for \pgas (dashed line)
    and PMMH ($*$) under parameter uncertainty. The gray line is the PF estimate and the dots show the
    range-bearing measurements, transformed to Cartesian coordinates.} 
  \label{fig:degen1_position}
\end{figure}

It is worth to point out that this SMC sampler is not more complicated to implement than a sampler targeting the original model \eqref{eq:degen1_ctmodel}.
In fact, a natural way to do the implementation is to run the sampler as if targeting $\begin{bmatrix} x_t^\+ & z_t^\+ \end{bmatrix}^\+$ jointly\footnote{For the results presented here, we used 
a standard bootstrap PF, which is very straightforward to implement.}.
The difference is that the $z_t$-particles are seen as conditional sufficient statistics for the $z_t$-state (which is conditionally deterministic),
similarly to how one propagates the sufficient statistics for the conditionally linear state in an RBPF.
The difference lies in how the backward sampling is done, where in the marginal model
we only consider the $x_{t}$-states when computing the backward weights.

The \pgas sampler was run with $\Np = 5$ particles for 50000 iterations, with the first 10000 samples discarded as burnin.
We used an adaptive truncation level with $\gamma = 0.1$ and $\tau = 10^{-2}$ (same as before), resulting in an average truncation level of $2.3$.
As a comparison, we also employ a particle marginal Metropolis-Hastings (PMMH) sampler \cite{AndrieuDH:2010}, with $\Np = 5000$ particles,
also running for 50000 MCMC iterations (discarding the first 10000 samples).
The smoothed estimates of the target trajectory are shown in \Fig{degen1_position} and the posterior density of $\parameter$ is given in \Fig{degen1_theta}.
From these results we see that the \pgas sampler provides accurate inferential results, despite the truncation of the backward weights and
without any problem specific tuning of the variables $\gamma$ and $\tau$.

\begin{figure}[ptb]
  \centering
  \includegraphics[width = 0.7\linewidth]{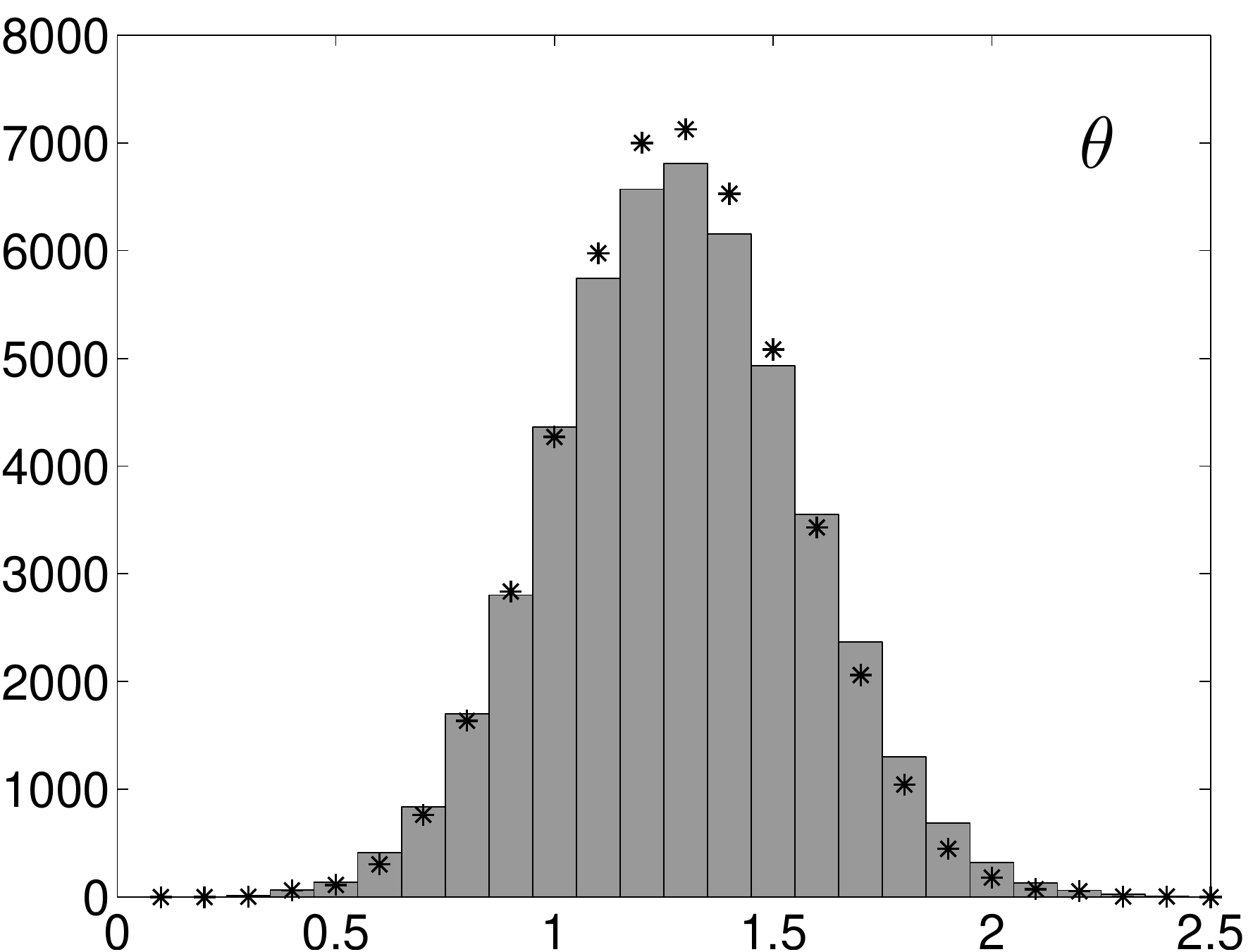}
  \caption{Histograms representing the posterior density $p(\parameter \mid y_{1:\T})$ for the \pgas sampler (gray bars) and for PMMH ($*$).
    The ``true'' value, used in the data generation, is 1.}
  \label{fig:degen1_theta}
\end{figure}

\section{Discussion}\label{sec:discussion}%
\pgas is a novel approach to PMCMC that makes use of backward simulation
ideas without needing an explicit backward pass.  Compared to \pgbs, a
conceptually similar method that does require an explicit backward pass,
\pgas has advantages, most notably for inference in the non-Markovian \ssm{s}
that have been our focus here.  When using the proposed truncation of the 
backward weights, we have found \pgas to be more robust to the approximation 
error than \pgbs.  Furthermore, for non-Markovian models, \pgas is easier 
to implement than \pgbs, since it requires less bookkeeping. 
It can also be more memory efficient, since it does not require us to 
store intermediate quantities that are needed for a separate backward simulation pass,
as is done in \pgbs. Finally, we note that \pgas can be used as an alternative 
to \pgbs for other inference problems to which PMCMC can be applied, and we
believe that it will prove attractive in problems beyond the non-Markovian
\ssm{s} that we have discussed here.

\appendix
\section{Proof of \Prop{truncation_kld}}\label{app:proofs}%
With $M = \T - t$ and $w(k) = w_t^k$, the distributions of interest are given by
\begin{align*}
  P(k) &= \frac{w(k)\prod_{s = 1}^M h_s(k)} {\normsum{l} w(l)\prod_{s = 1}^M h_s(l)} & &\text{ and } &
  \widetilde P_p(k) &= \frac{w(k)\prod_{s = 1}^p h_s(k)} {\normsum{l} w(l)\prod_{s = 1}^p h_s(l)},
\end{align*}
respectively. Let
$\varepsilon_s \triangleq \max_{k,l} \left( h_s(k) / h_s(l)-1 \right) \leq A\exp(-cs)$
and consider
\begin{align*}
  \left( \sum_l w(l) \prod_{s=1}^p h_s(l) \right) \prod_{s = p+1}^M h_s(k) &\leq  \sum_l  w(l) \prod_{s=1}^p h_s(l) \prod_{s = p+1}^M h_s(l)(1+\varepsilon_s) \\
  &= \left(  \sum_l  w(l) \prod_{s=1}^M h_s(l)  \right) \prod_{s = p+1}^M (1+\varepsilon_s) .
\end{align*}
It follows that the KLD is bounded according to,
\begin{align}
  \nonumber
  \KLD(P \| \widetilde P_p) &= \sum_k P(k) \log \frac{P(k)}{\widetilde P_p(k)} \\
  \nonumber
  &= \sum_k P(k) \log \left( \frac{ \prod_{s = p+1}^M h_s(k) \left( \sum_l w(l) \prod_{s=1}^p h_s(l) \right) }{  \sum_l w(l) \prod_{s=1}^M h_s(l) } \right) \\
  \nonumber
  &\leq \sum_k P(k) \sum_{s = p+1}^M \log(1+\varepsilon_s) \leq \sum_{s = p+1}^M \varepsilon_s \leq A \sum_{s = p+1}^M \exp(-cs) \\
  &= A \frac{e^{-c(p+1)} - e^{-c(M+1)}}{1-e^{-c}}.
  \tag*{$\square$}
\end{align}

\bibliographystyle{IEEEtran}
\bibliography{references}

\begin{thebibliography}{10}
\providecommand{\url}[1]{#1}
\csname url@samestyle\endcsname
\providecommand{\newblock}{\relax}
\providecommand{\bibinfo}[2]{#2}
\providecommand{\BIBentrySTDinterwordspacing}{\spaceskip=0pt\relax}
\providecommand{\BIBentryALTinterwordstretchfactor}{4}
\providecommand{\BIBentryALTinterwordspacing}{\spaceskip=\fontdimen2\font plus
\BIBentryALTinterwordstretchfactor\fontdimen3\font minus
  \fontdimen4\font\relax}
\providecommand{\BIBforeignlanguage}[2]{{%
\expandafter\ifx\csname l@#1\endcsname\relax
\typeout{** WARNING: IEEEtran.bst: No hyphenation pattern has been}%
\typeout{** loaded for the language `#1'. Using the pattern for}%
\typeout{** the default language instead.}%
\else
\language=\csname l@#1\endcsname
\fi
#2}}
\providecommand{\BIBdecl}{\relax}
\BIBdecl

\bibitem{LindstenJS:2012}
F.~Lindsten, M.~I. Jordan, and T.~B. Sch\"on, ``Ancestor sampling for particle
  {G}ibbs,'' in \emph{Proceedings of the 2012 Conference on Neural Information
  Processing Systems ({NIPS})}, Lake Tahoe, NV, USA, Dec. 2012.

\bibitem{AndrieuDH:2010}
C.~Andrieu, A.~Doucet, and R.~Holenstein, ``Particle {M}arkov chain {M}onte
  {C}arlo methods,'' \emph{Journal of the Royal Statistical Society: Series B},
  vol.~72, no.~3, pp. 269--342, 2010.

\bibitem{WhiteleyAD:2010}
N.~Whiteley, C.~Andrieu, and A.~Doucet, ``Efficient {B}ayesian inference for
  switching state-space models using discrete particle {M}arkov chain {M}onte
  {C}arlo methods,'' Bristol Statistics Research Report 10:04, Tech. Rep.,
  2010.

\bibitem{LindstenS:2012}
F.~Lindsten and T.~B. Sch\"{o}n, ``On the use of backward simulation in the
  particle {G}ibbs sampler,'' in \emph{Proceedings of the 2012 {IEEE}
  International Conference on Acoustics, Speech and Signal Processing
  ({ICASSP})}, Kyoto, Japan, Mar. 2012.

\bibitem{DoucetJ:2011}
A.~Doucet and A.~Johansen, ``A tutorial on particle filtering and smoothing:
  Fifteen years later,'' in \emph{The Oxford Handbook of Nonlinear Filtering},
  D.~Crisan and B.~Rozovsky, Eds.\hskip 1em plus 0.5em minus 0.4em\relax Oxford
  University Press, 2011.

\bibitem{PittS:1999}
M.~K. Pitt and N.~Shephard, ``Filtering via simulation: Auxiliary particle
  filters,'' \emph{Journal of the American Statistical Association}, vol.~94,
  no. 446, pp. 590--599, 1999.

\bibitem{DykP:2008}
D.~A.~V. Dyk and T.~Park, ``Partially collapsed {G}ibbs samplers: Theory and
  methods,'' \emph{Journal of the American Statistical Association}, vol. 103,
  no. 482, pp. 790--796, 2008.

\bibitem{Whiteley:2010}
N.~Whiteley, ``Discussion on {P}article {M}arkov chain {M}onte {C}arlo
  methods,'' Journal of the Royal Statistical Society: Series {B}, 72(3), p
  306--307, 2010.

\bibitem{ChenL:2000}
R.~Chen and J.~S. Liu, ``Mixture {K}alman filters,'' \emph{Journal of the Royal
  Statistical Society: Series {B}}, vol.~62, no.~3, pp. 493--508, 2000.

\bibitem{DoucetGA:2000}
A.~Doucet, S.~J. Godsill, and C.~Andrieu, ``On sequential {M}onte {C}arlo
  sampling methods for {B}ayesian filtering,'' \emph{Statistics and Computing},
  vol.~10, no.~3, pp. 197--208, 2000.

\bibitem{SchonGN:2005}
T.~Sch\"{o}n, F.~Gustafsson, and P.-J. Nordlund, ``Marginalized particle
  filters for mixed linear/nonlinear state-space models,'' \emph{{IEEE}
  Transactions on Signal Processing}, vol.~53, no.~7, pp. 2279--2289, Jul.
  2005.

\bibitem{SarkkaBG:2012}
S.~S{\"a}rkk{\"a}, P.~Bunch, and S.~Godsill, ``A backward-simulation based
  {R}ao-{B}lackwellized particle smoother for conditionally linear {G}aussian
  models,'' in \emph{Proceedings of the 16th {IFAC} Symposium on System
  Identification}, Brussels, Belgium, Jul. 2012.

\bibitem{FongGDW:2002}
W.~Fong, S.~J. Godsill, A.~Doucet, and M.~West, ``Monte {C}arlo smoothing with
  application to audio signal enhancement,'' \emph{{IEEE} Transactions on
  Signal Processing}, vol.~50, no.~2, pp. 438--449, Feb. 2002.

\bibitem{HjortHMW:2010}
N.~L. Hjort, C.~Holmes, P.~Müller, and S.~G. Walker, Eds., \emph{Bayesian
  Nonparametrics}.\hskip 1em plus 0.5em minus 0.4em\relax Cambridge University
  Press, 2010.

\bibitem{MacEachernCL:1999}
S.~N. MacEachern, M.~Clyde, and J.~S. Liu, ``Sequential importance sampling for
  nonparametric {B}ayes models: The next generation,'' \emph{The Canadian
  Journal of Statistics}, vol.~27, no.~2, pp. 251--267, 1999.

\bibitem{Fearnhead:2004}
P.~Fearnhead, ``Particle filters for mixture models with an unknown number of
  components,'' \emph{Statistics and Computing}, vol.~14, pp. 11--21, 2004.

\bibitem{RasmussenW:2006}
C.~E. Rasmussen and C.~K.~I. Williams, \emph{Gaussian Processes for Machine
  Learning}.\hskip 1em plus 0.5em minus 0.4em\relax {MIT} Press, 2006.

\bibitem{Bouchard-CoteSJ:2012}
A.~Bouchard-C\^ot\'e, S.~Sankararaman, and M.~I. Jordan, ``Phylogenetic
  inference via sequential {M}onte {C}arlo,'' \emph{Systematic Biology},
  vol.~61, no.~4, pp. 579--593, 2012.

\bibitem{TehDR:2008}
Y.~W. Teh, H.~{Daum\'e III}, and D.~Roy, ``Bayesian agglomerative clustering
  with coalescents,'' \emph{Advances in Neural Information Processing}, pp.
  1473--1480, 2008.

\bibitem{Bierman:1973}
G.~J. Bierman, ``Fixed interval smoothing with discrete measurements,''
  \emph{International Journal of Control}, vol.~18, no.~1, pp. 65--75, 1973.

\bibitem{GodsillDW:2004}
S.~J. Godsill, A.~Doucet, and M.~West, ``{M}onte {C}arlo smoothing for
  nonlinear time series,'' \emph{Journal of the American Statistical
  Association}, vol.~99, no. 465, pp. 156--168, Mar. 2004.

\bibitem{ArulampalamMGC:2002}
M.~S. Arulampalam, S.~Maskell, N.~Gordon, and T.~Clapp, ``A tutorial on
  particle filters for online nonlinear/non-{G}aussian {B}ayesian tracking,''
  \emph{{IEEE} Transactions on Signal Processing}, vol.~50, no.~2, pp.
  174--188, 2002.

\bibitem{ArulampalamRGM:2004}
M.~S. Arulampalam, B.~Ristic, N.~Gordon, and T.~Mansell, ``Bearings-only
  tracking of manoeuvring targets using particle filters,'' \emph{EURASIP
  Journal on Applied Signal Processing}, vol.~15, pp. 2351--2365, 2004.

\bibitem{RisticAG:2004}
B.~Ristic, S.~Arulampalam, and N.~Gordon, \emph{Beyond the {K}alman filter:
  particle filters for tracking applications}.\hskip 1em plus 0.5em minus
  0.4em\relax London, UK: Artech House, 2004.

\end{thebibliography}

\end{document}